\newtheorem{thm}{Theorem}[section]
\newtheorem{theorem}{Theorem}[section]
\newtheorem{lemma}[thm]{Lemma}
\newtheorem{cor}[thm]{Corollary}
\newtheorem{defn}[thm]{Definition}
\numberwithin{equation}{section}
\begin{document}
\title[Sensitivity and block sensitivity of nested canalyzing function]{Sensitivity and block sensitivity of nested canalyzing function}
\author[Yuan Li and John O. Adeyeye]{Yuan Li$^{1\ast}$ and John O. Adeyeye $^{2\ast}$}

\address{{\small $^{1}$Department of Mathematics, Winston-Salem State University, NC
27110,USA}\\
{\small email: liyu@wssu.edu }\\
$^{2}$Department of Mathematics, Winston-Salem State University, NC 27110,USA,
{\small email: adeyeyej@wssu.edu}}

\thanks{$^{\ast}$ Supported by an award from the USA DoD $\#$ W911NF-11-10166}
\keywords{Nested Canalyzing Function, Layer Number, Sensitivity, Block sensitivity, Monotone Function, Multinomial Coefficient. }
\date{}

\begin{abstract}
Based on a recent characterization of nested canalyzing function (NCF), we obtain the formula of the sensitivity of any NCF. Hence we find that any sensitivity of NCF is between $\frac{n+1}{2}$ and $n$. Both lower and upper bounds are tight. We prove that the block sensitivity, hence the $l$-block sensitivity, is same to the sensitivity. It is well known that monotone function also has this property. We eventually find all the functions which are both monotone and nested canalyzing (MNCF). The cardinality of all the MNCF is also provided.

\end{abstract}
\maketitle

\section{Introduction}

\label{sec-intro} 
Nested Canalyzing Functions (NCFs) were introduced recently in \cite{Kau2}.
One important characteristic of (nested) canalyzing functions is that they
exhibit a stabilizing effect on the dynamics of a system. That is, small
perturbations of an initial state should not grow in time and must eventually
end up in the same attractor of the initial state. The stability is typically
measured using so-called Derrida plots which monitor the Hamming distance
between a random initial state and its perturbed state as both evolve over
time. If the Hamming distance decreases over time, the system is considered
stable. The slope of the Derrida curve is used as a numerical measure of
stability. Roughly speaking, the phase space of a stable system has few
components and the limit cycle of each component is short.

It is shown in \cite{Jar} that the class of nested canalyzing functions is identical to the class of so-called unate
cascade Boolean functions, which has been studied extensively in engineering
and computer science. It was shown in \cite{But} that this class produces the
binary decision diagrams with the shortest average path length. Thus, a more
detailed mathematical study of this class of functions has applications to
problems in engineering as well.

In \cite{Coo}, Cook et al. introduced the notion of sensitivity as a combinatorial 
measure for Boolean functions providing lower bounds on the time needed by CREW PRAM (concurrent read
 , but exclusive write (CREW) parallel random access machine (PRAM)). It was extended by Nisan \cite{Nis} to block
 sensitivity. It is still open whether sensitivity and block sensitivity are polynomially related. Although the definition is straightforward, the sensitivity is understood only for a few classes function. For monotone function, The block sensitivity is same as its sensitivity.
 
 Recently, in \cite{Yua1}, a complete characterization for nested canalyzing
function is obtained by obtaining its unique algebraic normal form. A new concept, called  LAYER NUMBER, is introduced. Based on this,    explicit formulas for number of all the nested canalyzing functions and the average sensitivity of any NCF are provided.  Theoretically, It is showed why NCF is stable since the upper bound is a constant.

In this paper, we obtain the formula of the sensitivity of any NCF based on a characterization of NCF from \cite{Yua1}. We show the block sensitivity, like monotone function, is also same to its sensitivity. Finally, we characterize all the Boolean functions which are both nested canalyzing and monotone. We also give the number of such functions.
\section{Preliminaries}

\label{2} In this section we introduce the definitions and notations. Let
$\mathbb{F}=\mathbb{F}_{2}$ be the Galois field with $2$ elements. If $f$ is a
$n$ variable function from $\mathbb{F}^{n}$ to $\mathbb{F}$, it is well known
\cite{Lid} that $f$ can be expressed as a polynomial, called the algebraic
normal form(ANF):
\[
f(x_{1},x_{2},\ldots,x_{n})=\bigoplus_{0\leq k_i\leq 1,i=1,\ldots,n}a_{k_{1}k_{2}\ldots k_{n}}{x_{1}}^{k_{1}}{x_{2}}^{k_{2}%
}\cdots{x_{n}}^{k_{n}}%
\]
where each coefficient $a_{k_{1}k_{2}\ldots k_{n}}\in\mathbb{F}$ is a
constant. 

\begin{defn}\label{def2.1} Let $f$ be a Boolean function in $n$ variables. Let $\sigma$ be
a permutation on $\{1,2,\ldots,n\}$. The function $f$ is nested canalyzing
function (NCF) in the variable order

$x_{\sigma(1)},\ldots,x_{\sigma(n)}$ with canalyzing input values
$a_{1},\ldots,a_{n}$ and canalyzed values $b_{1},\ldots,b_{n}$, if it can be
represented in the form

$f(x_{1},\ldots,x_{n})=\left\{
\begin{array}
[c]{ll}%
b_{1} & x_{\sigma(1)}=a_{1},\\
b_{2} & x_{\sigma(1)}= \overline{ a_{1}}, x_{\sigma(2)}=a_{2},\\
b_{3} & x_{\sigma(1)}= \overline{ a_{1}}, x_{\sigma(2)}= \overline{ a_{2}},
x_{\sigma(3)}=a_{3},\\
.\ldots. & \\
b_{n} & x_{\sigma(1)}= \overline{ a_{1}}, x_{\sigma(2)}= \overline{ a_{2}%
},\ldots,x_{\sigma(n-1)}= \overline{ a_{n-1}}, x_{\sigma(n)}=a_{n},\\
\overline{b_{n}} & x_{\sigma(1)}= \overline{ a_{1}}, x_{\sigma(2)}= \overline{
a_{2}},\ldots,x_{\sigma(n-1)}= \overline{ a_{n-1}}, x_{\sigma(n)}=\overline{
a_{n}}.
\end{array}
\right. $

Where $\overline{a}=a\oplus 1$.The function f is nested canalyzing if f is nested
canalyzing in the variable order $x_{\sigma(1)},\ldots,x_{\sigma(n)}$ for some
permutation $\sigma$.
\end{defn}

\begin{theorem}\label{th1}\cite{Yua1}
 Given $n\geq2$, $f(x_{1},x_{2},\ldots,x_{n})$ is nested
canalyzing iff it can be uniquely written as
\begin{equation}\label{eq2.1}
f(x_{1},x_{2},\ldots,x_{n})=M_{1}(M_{2}(\ldots(M_{r-1}%
(M_{r}\oplus 1)\oplus 1)\ldots)\oplus 1)\oplus b.
\end{equation}
Where  $M_{i}=\prod_{j=1}^{k_{i}}(x_{i_{j}}\oplus a_{i_{j}})$,
$i=1,\ldots,r$, $k_{i}\geq1$ for $i=1,\ldots,r-1$, $k_{r}\geq2$, $k_{1}%
+ \ldots + k_{r}=n$, $a_{i_{j}}\in\mathbb{F}_{2}$, $\{i_{j}|j=1,\ldots,k_{i},
i=1,\ldots,r\}=\{1,\ldots,n\}$.
\end{theorem}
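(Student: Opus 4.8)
The plan is to prove the two implications of the equivalence separately and then establish uniqueness, which I expect to be the genuinely delicate part. Throughout I would work with the shifted variable $z_m = x_m \oplus a_m$, so that each factor of an $M_i$ vanishes exactly when its variable is set to its canalyzing value, and I would lean on the recursive identity
\[
f(x_1,\dots,x_n) = b_1 \oplus (x_{\sigma(1)} \oplus a_1)\bigl(g \oplus b_1\bigr),
\]
where $g = f|_{x_{\sigma(1)} = \overline{a_1}}$ is the function obtained after the first canalyzing test fails. A direct check against Definition \ref{def2.1} shows that $g$ is again nested canalyzing, now in the variables $x_{\sigma(2)},\dots,x_{\sigma(n)}$ with data $(a_i,b_i)_{i\ge 2}$; this recursion is the engine for both directions.

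For the easy implication (any $f$ of the stated form is an NCF) I would induct on $n$. Peeling off the first factor $(x_{i_1}\oplus a_{i_1})$ of $M_1$ and setting that variable to $a_{i_1}$ forces $M_1=0$ and hence $f=b$, while setting it to $\overline{a_{i_1}}$ deletes the factor. If $k_1\ge 2$ this lands in the same canonical shape with the first layer shortened to $k_1-1$; if $k_1=1$ the factor disappears and the inner expression $M_2(\cdots)\oplus 1$ is exposed, which is again canonical but with base value flipped to $\overline{b}$ and one fewer layer. Either way the induction hypothesis applies to an $(n-1)$-variable function, and combining with the recursive identity exhibits the nested canalyzing structure; the base case $n=2$ (which always forces $r=1$) is checked by hand.

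For the forward implication (every NCF has the form) I would again induct through $g=f|_{x_{\sigma(1)}=\overline{a_1}}$: by the induction hypothesis $g$ has a canonical representation, and substituting it into the recursive identity yields a representation of $f$. The bookkeeping splits according to whether $b_2=b_1$, in which case the new variable joins the current outermost layer (so that layer grows by one), or $b_2\ne b_1$, in which case $(x_{\sigma(1)}\oplus a_1)$ becomes a fresh size-one layer sitting in front. The subtle point here — and the reason the normalization $k_i\ge 1$ for $i<r$ together with $k_r\ge 2$ is imposed — is the degenerate merging phenomenon encapsulated by the identity $M_{r-1}\bigl((x\oplus a)\oplus 1\bigr)=M_{r-1}\,(x\oplus \overline{a})$: a final layer consisting of a single factor is never genuinely separate but is absorbed into the preceding product. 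Because the induction hypothesis already delivers $g$ in normalized shape, this absorption is automatic, and the last layer of $f$ inherits size at least two.

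The hard part is \emph{uniqueness}. Since the ANF of a Boolean function is unique, it suffices to show that distinct admissible tuples $(r;k_1,\dots,k_r;\text{the assignment of variables to layers};a_{i_j};b)$ yield distinct polynomials, where two tuples are identified only when they differ by reordering the commuting factors within a single $M_i$. I would recover the parameters from the outside in: first observe that the top monomial $x_1\cdots x_n$ always occurs, with coefficient equal to the product of the leading coefficients, namely $1$, so $\deg f = n$; next peel off the outermost layer $M_1$ by examining the restrictions of $f$ and invoking the induction hypothesis on the strictly smaller canonical form that remains; and finally read off $b$ as the leftover constant. The essential obstruction to confront at this step is precisely the collapsing identity above: one must verify that the constraint $k_r\ge 2$ eliminates the only mechanism by which two genuinely different layer decompositions could collapse to the same ANF, thereby forcing the layer sizes $k_i$, the membership of each variable in its layer, and each canalyzing value $a_{i_j}$. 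Ruling out this degeneracy cleanly is where I expect the real work to lie.
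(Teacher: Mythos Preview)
The paper does not actually prove this theorem: it is quoted as a preliminary result from \cite{Yua1} and used without argument, so there is no ``paper's own proof'' to compare against. Your proposal is therefore an independent proof attempt rather than a reconstruction.

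That said, your outline is sound and is essentially the natural approach. The recursive identity
\[
f = b_1 \oplus (x_{\sigma(1)}\oplus a_1)\bigl(g\oplus b_1\bigr),\qquad g = f|_{x_{\sigma(1)}=\overline{a_1}},
\]
is correct and drives both implications cleanly. For uniqueness, the cleanest way to execute your ``recover the parameters from the outside in'' plan is to observe that the variables appearing in $M_1$ are \emph{exactly} the canalyzing variables of $f$ (those variables admitting an input value that makes $f$ constant): any factor of $M_1$ set to its $a_{i_j}$ forces $M_1=0$ and $f=b$, whereas fixing any variable in $M_i$ with $i\ge 2$ leaves $M_1$ free and hence $f$ nonconstant. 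This intrinsic description pins down simultaneously the set of first-layer variables, their $a_{i_j}$'s, and $b$; restricting to all first-layer variables at their noncanalyzing values yields the inner NCF on strictly fewer variables, and induction finishes. The constraint $k_r\ge 2$ enters precisely where you identified it, to block the collapse $M_{r-1}\bigl((x\oplus a)\oplus 1\bigr)=M_{r-1}(x\oplus\overline{a})$; with that normalization the inductive base ($r=1$, $n\ge 2$) and the inductive step both go through, so your plan is complete once this characterization of the outermost layer is made explicit.
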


Because each NCF can be uniquely written as \ref{eq2.1} and the number $r$ is
uniquely determined by $f$, we have

\begin{defn}\label{def2.2} For a NCF $f$ written as equation \ref{eq2.1}, the number $r$ will
be called its LAYER NUMBER. Variables of $M_{1}$ will be called the
most dominant variables(canalyzing variable), they belong to the first layer of this NCF.
Variables of $M_{2}$ will be called the second most dominant variables and belong to the second layer 
of this NCF and etc. We Call $[k_1,\ldots,k_r]$ the profile of $f$. There are $2^{n+1}$ NCFs with the same profile.
\end{defn}

\section{Sensitivity and block sensitivity of NCF}
Let $\mathbf{x}=(x_1,\ldots,x_n)\in \mathbb{F}^n$, $[n]=\{1,\ldots,n\}$. For any subset $S$ of $[n]$, we form $\mathbf{x}^S$ by complementing those bits in $\mathbf{x}$ indexed by elements of $S$. We write $\mathbf{x}^i$ for $\mathbf{x}^{\{i\}}$.

\begin{defn}
The sensitivity of $f$ at $\mathbf{x}$, $s(f;\mathbf{x})$, is the number of indices $i$ such that $f(\mathbf{x})\neq f(\mathbf{x}^i)$.
 The sensitivity of $f$, denoted $s(f)$, is $Max_{\mathbf{x}}s(f;\mathbf{x})$
\end{defn}
\begin{defn}\cite{Nis}
The block sensitivity of $f$ at $\mathbf{x}$, $bs(f;\mathbf{x})$, is the maximum number of disjoint subsets $B_1,\ldots,B_r$ of $[n]$ such that, for all $j$, $f(\mathbf{x})\neq f(\mathbf{x}^{B_j})$. We refer to such a set $B_j$ as a block. The block sensitivity of $f$, denoted $bs(f)$, is $Max_{\mathbf{x}}bs(f;\mathbf{x})$.
\end{defn}

\begin{defn}\cite{Ken}
The $l$-block sensitivity of $f$ at $\mathbf{x}$, $bs_{l}(f;\mathbf{x})$, is the maximum number of disjoint subsets $B_1,\ldots,B_r$ of $[n]$ such that, for all $j$, $B_j\leq l$ and $f(\mathbf{x})\neq f(\mathbf{x}^{B_j})$.  The $l$-block sensitivity of $f$, denoted $bs_{l}(f)$, is $Max_{\mathbf{x}}bs_l(f;\mathbf{x})$.
\end{defn}

Obviously, we have $0\leq s(f)\leq bs_l(f)\leq bs(f)\leq n$.

\begin{lemma}\label{lm3.1}
Let $\sigma$ be a permutation on $[n]$, and $(a_1\ldots,a_n)\in \mathbb{F}^n$, Let $g=f(x_{\sigma(1)},\ldots,x_{\sigma(n)})$ and $h=f(x_1\oplus a_1,\ldots, x_n\oplus a_n)$. Then the sensitivity, $l$-block sensitivity and block sensitivity of $f$, $f\oplus 1$, $g$, $h$ are same.
\end{lemma}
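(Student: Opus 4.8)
The plan is to show that each of the four operations — reordering variables (passing from $f$ to $g$), complementing inputs (passing from $f$ to $h$), and negating the output (passing from $f$ to $f\oplus 1$) — induces a bijection on $\mathbb{F}^n$ together with a simple transformation of single-bit and block flips, under which the relevant quantities are manifestly preserved. The key observation is that all three sensitivity measures ($s$, $bs_l$, $bs$) are defined purely in terms of which disjoint subsets $B\subseteq[n]$ satisfy $f(\mathbf{x})\neq f(\mathbf{x}^B)$, and each of the three operations either relabels the coordinates of $\mathbf{x}$, shifts $\mathbf{x}$ by a fixed vector, or negates the output value without touching the disagreement pattern $f(\mathbf{x})\neq f(\mathbf{x}^B)$ at all.

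First I would handle $f\oplus 1$, which is immediate: for every $\mathbf{x}$ and every block $B$, the statement $f(\mathbf{x})\neq f(\mathbf{x}^B)$ is logically equivalent to $(f\oplus 1)(\mathbf{x})\neq (f\oplus 1)(\mathbf{x}^B)$, since negating both sides of an inequality of Boolean values preserves it. Hence $s(f\oplus 1;\mathbf{x})=s(f;\mathbf{x})$ and similarly for $bs_l$ and $bs$ at every point $\mathbf{x}$, so the maxima over $\mathbf{x}$ agree. Next I would treat $h=f(x_1\oplus a_1,\ldots,x_n\oplus a_n)$. Writing $\mathbf{a}=(a_1,\ldots,a_n)$, we have $h(\mathbf{x})=f(\mathbf{x}\oplus\mathbf{a})$, and crucially $h(\mathbf{x}^B)=f(\mathbf{x}^B\oplus\mathbf{a})=f((\mathbf{x}\oplus\mathbf{a})^B)$ because complementing the coordinates in $B$ commutes with the fixed translation by $\mathbf{a}$. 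Thus $h(\mathbf{x})\neq h(\mathbf{x}^B)$ holds exactly when $f(\mathbf{y})\neq f(\mathbf{y}^B)$ holds for $\mathbf{y}=\mathbf{x}\oplus\mathbf{a}$. Since $\mathbf{x}\mapsto\mathbf{x}\oplus\mathbf{a}$ is a bijection on $\mathbb{F}^n$ and leaves the block $B$ unchanged, all three measures at $\mathbf{x}$ for $h$ equal the corresponding measures at $\mathbf{y}$ for $f$, and taking maxima gives equality.

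The case $g=f(x_{\sigma(1)},\ldots,x_{\sigma(n)})$ requires tracking the permutation through the block structure and is the only place where a little care is needed. Here $g(\mathbf{x})=f(\sigma\cdot\mathbf{x})$ where $\sigma\cdot\mathbf{x}$ denotes the vector whose $i$-th entry is $x_{\sigma(i)}$; this is again a bijection of $\mathbb{F}^n$. For a block $B\subseteq[n]$ one checks that $g(\mathbf{x}^B)=f((\sigma\cdot\mathbf{x})^{\sigma(B)})$, so that flipping the coordinates in $B$ for $g$ corresponds to flipping the coordinates in the image block $\sigma(B)$ for $f$. Since $B\mapsto\sigma(B)$ is a bijection on subsets of $[n]$ that preserves cardinality and sends disjoint families to disjoint families, a witnessing family $B_1,\ldots,B_r$ for $g$ at $\mathbf{x}$ corresponds exactly to the witnessing family $\sigma(B_1),\ldots,\sigma(B_r)$ for $f$ at $\sigma\cdot\mathbf{x}$, with the same sizes. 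This yields $s(g;\mathbf{x})=s(f;\sigma\cdot\mathbf{x})$, $bs_l(g;\mathbf{x})=bs_l(f;\sigma\cdot\mathbf{x})$, and $bs(g;\mathbf{x})=bs(f;\sigma\cdot\mathbf{x})$, and maximizing over $\mathbf{x}$ completes the argument.

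The only genuine obstacle, and it is a mild one, is the bookkeeping in the permutation case: one must verify that the cardinality constraint $|B_j|\leq l$ defining the $l$-block sensitivity is respected, which it is precisely because $|\sigma(B)|=|B|$. Once the three operations are established individually, the general statement follows since any composite of them is again a bijection of $\mathbb{F}^n$ inducing a cardinality-preserving bijection on blocks, and the three measures are invariant under each factor.
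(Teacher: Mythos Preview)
Your argument is correct and is precisely a detailed unpacking of what the paper means when it writes that the lemma ``follows from the above definitions''---the paper offers no further proof beyond that one line. One harmless bookkeeping slip: with your convention $(\sigma\cdot\mathbf{x})_i = x_{\sigma(i)}$, the identity actually reads $g(\mathbf{x}^B)=f\bigl((\sigma\cdot\mathbf{x})^{\sigma^{-1}(B)}\bigr)$ rather than with $\sigma(B)$, but since $B\mapsto\sigma^{-1}(B)$ is equally a cardinality-preserving bijection taking disjoint families to disjoint families, your conclusion is unaffected.
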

\begin{proof}
This follows from the above definitions.
\end{proof}
Because of Lemma \ref{lm3.1}, In the rest of this section, for NCF in equation \ref{eq2.1}, we always assume
\begin{equation*}
f(x_{1},x_{2},\ldots,x_{n})=f_r=M_{1}(M_{2}(\ldots(M_{r-1}%
(M_{r}\oplus 1)\oplus 1)\ldots)\oplus 1)
\end{equation*}

and $M_1=x_1\ldots x_{k_1}$, $M_2=x_{k_1+1}\ldots x_{k_1+k_2}$,$\ldots$, $M_r=x_{k_1+\ldots+k_{r-1}+1}\ldots x_n$

Let $\mathbf{x}=(\mathbf{x_1},\ldots,\mathbf{x_r})$, where $\mathbf{x_i}$ has $k_i$ bits, $i=1,\ldots,r$.
We have 
\begin{lemma}\label{lm3.2}
For NCF $f$, for any word $\mathbf{x}$, if there are more than one zero bits in a subword $\mathbf{x_i}$, then we keep only one zero and flip the others to get a new word $\mathbf{x'}$, we have $s(f;\mathbf{x})\leq s(f;\mathbf{x'})$(In fact,   $s(f;\mathbf{x'})=s(f;\mathbf{x})$ or $s(f;\mathbf{x'})=s(f;\mathbf{x})+1$) and $bs(f;\mathbf{x})= bs(f;\mathbf{x'})$
\end{lemma}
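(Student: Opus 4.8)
The plan is to reduce everything to a single combinatorial statistic of $\mathbf{x}$: the index of the first subword that is not all-ones. Write $p=p(\mathbf{x})$ for the least $j\in\{1,\dots,r\}$ with $M_j=0$ (equivalently $\mathbf{x_j}\neq(1,\dots,1)$), and set $p=r+1$ if every $\mathbf{x_j}$ is all-ones. Unwinding the nested form $f=M_1(M_2(\cdots)\oplus 1)$ layer by layer — using that $M_j=1$ for $j<p$ while $M_p=0$, so the surviving inner factor collapses — a one-line induction gives $f(\mathbf{x})=1$ exactly when $p$ is even; in particular $f(\mathbf{x})$ depends only on the parity of $p(\mathbf{x})$. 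Since the subword selected by the Lemma has at least two zeros it satisfies $M_i=0$, hence $i\ge p$; reducing it to a single zero keeps it non-all-ones and leaves the subwords before $p$ untouched, so $p(\mathbf{x}')=p(\mathbf{x})$ and $f(\mathbf{x}')=f(\mathbf{x})$. This governing-index reduction is the common backbone for both assertions.

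First I would settle the sensitivity statement by analyzing each single-bit flip through its effect on $p$. A bit in a subword $\mathbf{x_j}$ with $j<p$ turns that all-ones subword into a non-all-ones one, moving the governing index to $j$, so such a bit is sensitive precisely when $j\not\equiv p\pmod 2$, and then all $k_j$ bits of that subword are sensitive; a bit in a subword of index $>p$ cannot change $p$ and is never sensitive; inside $\mathbf{x_p}$ only a zero can matter, and only when $\mathbf{x_p}$ has exactly one zero, in which case flipping it pushes the index to the next non-all-ones position $p'$ and is sensitive iff $p'\not\equiv p\pmod 2$. Thus $s(f;\mathbf{x})=A+\varepsilon$, where $A=\sum_{j<p,\ j\not\equiv p}k_j$ (summing over early subwords of parity opposite to $p$) and $\varepsilon\in\{0,1\}$ is the contribution of $\mathbf{x_p}$. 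The operation fixes $p$ and all subwords before $p$, so $A$ is unchanged; if $i>p$ it also leaves $\mathbf{x_p}$ and $p'$ alone, whence $s$ is unchanged, while if $i=p$ it turns $\mathbf{x_p}$ from ``$\ge 2$ zeros'' (contribution $0$) into ``exactly one zero'' (contribution in $\{0,1\}$). In every case $s(f;\mathbf{x})\le s(f;\mathbf{x}')\le s(f;\mathbf{x})+1$.

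For the block sensitivity I would classify a sensitive block $B$ by $q=p(\mathbf{x}^B)$, which must satisfy $q\not\equiv p\pmod 2$. A block with $q<p$ (``low'') must avoid all subwords before $q$ yet create a zero in $\mathbf{x_q}$, so it contains at least one bit of an early subword of parity opposite to $p$; a block with $q>p$ (``high'') must make $\mathbf{x_p}$ all-ones and therefore contains every zero of $\mathbf{x_p}$. In a disjoint family the low blocks occupy pairwise distinct bits of the opposite-parity early subwords, so there are at most $A$ of them, and since $\mathbf{x_p}$ has a zero at most one high block can occur; hence $bs(f;\mathbf{x})\le A+1$. Conversely, the $A$ singleton blocks formed by individual bits of the opposite-parity early subwords, together with one explicitly built high block — flip all zeros of $\mathbf{x_p}$ and, if needed, flip one bit of $\mathbf{x_{p+1}}$ to land on a $q$ of the opposite parity, always possible because $p\le r$ — are disjoint and all sensitive, giving $bs(f;\mathbf{x})=A+1$. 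As both $A$ and the inequality $p\le r$ are preserved by the operation, $bs(f;\mathbf{x})=bs(f;\mathbf{x}')$.

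I expect the block-sensitivity upper bound to be the crux. The clean dichotomy ``each low block spends a distinct opposite-parity early bit; each high block spends all zeros of $\mathbf{x_p}$'' must be argued with care, in particular for blocks that simultaneously touch early and late subwords, to guarantee that no disjoint packing beats $A+1$. Once the governing-index reduction is in place, the value computation and the sensitivity half are routine.
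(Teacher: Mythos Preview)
Your argument is correct, but it takes a genuinely different route from the paper's. The paper proves the lemma by a direct comparison: for sensitivity it simply notes that when $\mathbf{x_i}$ has at least two zeros, no single-bit flip inside $\mathbf{x_i}$ can change $M_i$, so the sensitive coordinates at $\mathbf{x}$ remain sensitive at $\mathbf{x'}$; for block sensitivity it takes an optimal minimal block family at $\mathbf{x}$, observes that at most one block can touch the zeros of $\mathbf{x_i}$ (since it must flip them all), and replaces that block by the corresponding singleton at $\mathbf{x'}$, then argues the reverse inequality informally. Your approach instead introduces the governing index $p(\mathbf{x})$, derives closed-form expressions $s(f;\mathbf{x})=A+\varepsilon$ and $bs(f;\mathbf{x})=A+1$ with $A=\sum_{j<p,\ j\not\equiv p}k_j$, and then observes that the operation preserves $p$ and $A$.

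What this buys: your method is longer to set up but strictly more informative --- the identity $bs(f;\mathbf{x})=A+1$ already contains the later Theorem~\ref{th3.2} (that $s(f)=bs(f)$), and your case analysis for $s(f;\mathbf{x})$ is essentially the content of the proof of Theorem~\ref{th3.1}. The paper's direct comparison is shorter and stays local to the lemma, but its reverse inequality $bs(f;\mathbf{x})\ge bs(f;\mathbf{x'})$ is handled rather loosely, whereas your explicit computation makes both directions transparent. Your low/high dichotomy for blocks is airtight: any sensitive block either touches some $\mathbf{x_j}$ with $j<p$ (forcing $q(B)<p$ and hence a bit in an opposite-parity early subword) or it does not (forcing $\mathbf{x_p}^B$ to be all-ones, hence containing every zero of $\mathbf{x_p}$); the concern you flag about mixed blocks is already handled by this trichotomy on $q(B)$.
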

\begin{proof}
There are at least two zeros in $\mathbf{x_i}$, so $M_i$ is always zero (hence, does not change) even if one of the bits is flipped. hence, $s(f;\mathbf{x})\leq s(f;\mathbf{x'})$. 

Let $bs(f;\mathbf{x})=t$, and $B_j$, $j=1,\ldots,t$ be the blocks such that $f(\mathbf{x}^{B_j})\neq f(\mathbf{x})$. We can assume that each block $B_j$ is minimal, i.e., for any proper subset $B_j'\subset B_j$, $f(\mathbf{x}^{B_j'})= f(\mathbf{x})$. Suppose there is a block $B_{j_0}$ involves the bits in $\mathbf{x_i}$, it means it changes the value of $M_i$ from 0 to 1. It must change all the zero bits in 
$\mathbf{x_i}$ to 1. Such $B_{j_0}$ is unique since all the blocks are disjoint. We can construct the block $B_{j_0}'$(a subset of $B_{j_0}$), which has only one index whose corresponding bit is the only zero bit of $\mathbf{x_i'}$. Take all the other blocks same as $B_j$ ($j\neq j_0$). We get the value of $bs(f,\mathbf{x'})\geq t=bs(f,\mathbf{x})$. On the other hand, there are more zeros in $\mathbf{x_i}$, in order to change the value of $M_i$ (hence, a possible change of $f$) from 0 to 1, it needs to change more than one bits,  hence the number of maximal blocks will be probably less (or same), i.e., we have $bs(f,\mathbf{x})\geq bs(f,\mathbf{x'})$. Hence, $bs(f,\mathbf{x})= bs(f,\mathbf{x'})$
\end{proof}
We are ready to prove the main result of this paper, we have
\begin{thm}\label{th3.1}
 $f_r$ is nested canalyzing with profile $[k_1,\ldots,k_r]$, then 

$s(f_1)=n$.
For $r>1$,
$s(f_r)=\left\{
\begin{array}
[c]{ll}%
Max\{k_1+k_3+\ldots+k_r,k_2+k_4+\ldots k_{r-1}+1\},2\nmid r\\
Max\{k_1+k_3+\ldots+k_{r-1}+1,k_2+k_4+\ldots k_{r}\},2|r\\
\end{array}
\right. $
\end{thm}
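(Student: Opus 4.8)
The plan is to first turn the nested algebraic expression \ref{eq2.1} into a transparent combinatorial evaluation rule, then read off the sensitivity. For an input $\mathbf{x}=(\mathbf{x_1},\ldots,\mathbf{x_r})$ each block value $M_i\in\{0,1\}$ equals $1$ exactly when the subword $\mathbf{x_i}$ is all ones. Define the \emph{deciding layer} $d(\mathbf{x})$ to be the least index $i$ with $M_i=0$, with the convention $d(\mathbf{x})=r+1$ when every $M_i=1$. Unfolding \ref{eq2.1} from the inside out, or by a short induction on the number of leading layers having $M_i=1$, I would show that $f_r(\mathbf{x})=0$ when $d(\mathbf{x})$ is odd and $f_r(\mathbf{x})=1$ when $d(\mathbf{x})$ is even; in particular the output depends on $\mathbf{x}$ only through the parity of $d(\mathbf{x})$.

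Next, using Lemma \ref{lm3.2} I would reduce to inputs in which every subword contains at most one zero, so each layer is either all ones ($M_i=1$) or carries a single zero ($M_i=0$), and $d$ is the first layer carrying a zero. In this normal form the sensitive coordinates are easy to enumerate: flipping a one in an all-ones layer $\ell<d$ moves the deciding layer to $\ell$, changing $f_r$ exactly when $\ell\not\equiv d\pmod 2$; flipping the unique zero of layer $d$ pushes the deciding layer forward to the next zero-layer (or to $r+1$) and changes $f_r$ only when that successor has the opposite parity to $d$; and no flip among the ones of layer $d$ or in any layer after $d$ can move the deciding layer. This gives the exact local formula
\[
s(f_r;\mathbf{x})=\sum_{\ell<d,\ \ell\not\equiv d}k_\ell+\beta,\qquad \beta\in\{0,1\},
\]
where $\beta=1$ precisely when layer $d$ holds a single zero whose successor zero-layer has opposite parity.

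Finally I would maximize this over $d$ and over the set of zero-carrying layers, splitting on the parity of $d$. Writing $S_{\mathrm{odd}}=\sum_{\ell\ \mathrm{odd}}k_\ell$ and $S_{\mathrm{even}}=\sum_{\ell\ \mathrm{even}}k_\ell$, an even $d$ collects only odd layers (at most $S_{\mathrm{odd}}$) and an odd $d$ collects only even layers (at most $S_{\mathrm{even}}$), each with a possible $+1$ from $\beta$. The two extremal configurations are the all-ones input (deciding layer $r+1$, no bonus) and the input that is all ones except a single zero in the last layer (deciding layer $r$, bonus $\beta=1$); these realize the two competing quantities in the stated maximum. The delicate point, and the step I expect to be the main obstacle, is the matching upper bound showing that the $+1$ can attach only to the parity class that does \emph{not} contain the index $r$. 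For $r$ even one must rule out that the even-layer side ever exceeds $S_{\mathrm{even}}$: collecting all even layers (including $k_r$) forces $d=r+1$, the all-ones input, which has no zero to flip and hence no bonus, whereas any firing layer $d\le r-1$ that could supply a bonus loses the entire block $k_r$ (and possibly more), and since $k_r\ge 2$ this loss outweighs the single unit $\beta$ can contribute. The same accounting with the parities interchanged settles $r$ odd, producing exactly the two displayed formulas. The degenerate case $r=1$ is immediate: $f_1=x_1\cdots x_n$ is the AND function, fully sensitive at the all-ones input, so $s(f_1)=n$.
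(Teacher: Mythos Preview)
Your approach is correct and essentially the same as the paper's: both reduce via Lemma~\ref{lm3.2} to inputs with at most one zero per subword and then case-split on the location of the first zero (your ``deciding layer'' $d$ is exactly the paper's Case~$d$), identifying the all-ones input and the all-ones-except-a-single-zero-in-layer-$r$ input as the two extremal configurations. Your packaging via the parity of $d$ and the unified local formula $\sum_{\ell<d,\ \ell\not\equiv d}k_\ell+\beta$ is a bit cleaner, and your explicit use of $k_r\ge 2$ to rule out a bonus on the ``wrong'' parity class makes precise an upper-bound comparison that the paper leaves implicit in its ``In summary'' step.
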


\begin{proof}
It is obvious that $S(f_1)=n$.

For $r>1$, we first consider that $r$ is odd.

Let $s(f_r,\mathbf{x})$ be the sensitivity of $f_r$ on word $\mathbf{x}$ for $\mathbf{x}=(\mathbf{x_1},\ldots,\mathbf{x_r})$. Because of Lemma \ref{lm3.2}, in order to find the maximal value, we can assume that there is either no zero or exactly one zero bit in every $\mathbf{x_i}$.

In the following, we consider all the possibilities of such words $(\mathbf{x_1},\ldots,\mathbf{x_r})$.

Case 1: One zero in $\mathbf{x_1}$.

$f=0$, in order to change the value, the zero bit in $\mathbf{x_1}$ must be changed. Hence, $S(f_r,\mathbf{x})\leq 1$.

Case 2: No zero in $\mathbf{x_1}$, but one zero in $\mathbf{x_2}$. 

$f_r=M_1(M_2(\ldots)\oplus 1)=M_1=1$,  the value of $f_r$ does not change by flipping any bit in $\mathbf{x_i}$( $i\geq 3$)or  any nonzero bits in $M_2$. Hence, $s(f_r,\mathbf{x})\leq k_1+1$.

Case 3: No zero in $\mathbf{x_1}$ and $\mathbf{x_2}$, but a zero in $\mathbf{x_3}$. 

$f_r=M_1(M_2(M_3(\ldots)\oplus 1)\oplus 1)=M_1(M_2\oplus 1)=0$. In order to change the value of $f_r$ (from 0 to 1), we can only flip the bits in $\mathbf{x_2}$ or possibly the zero bit in $\mathbf{x_3}$, hence, $s(f_r,\mathbf{x})\leq k_2+1$.

Case 4: No zero in $\mathbf{x_1}$, $\mathbf{x_2}$ and $\mathbf{x_3}$, but a zero in $\mathbf{x_4}$. 

$f_r=M_1(M_2(M_3\oplus 1)\oplus 1)=1$.
 We can change the value of $f_r$ (from 1 to 0)by flipping any bit in $\mathbf{x_1}$ or $\mathbf{x_3}$(or possible the zero bit in $\mathbf{x_4}$) but not the bit in $\mathbf{x_2}$ and all the $\mathbf{x_i}$ with $i\geq 5$.
  Hence, we have  $S(f_r,\mathbf{x})\leq k_1+k_3+1$.
  
  $\ldots$ $\ldots$ $\ldots$
  
Case $r$: No zero in $\mathbf{x_i}$, $i=1,\ldots,r-1$ but one zero in $\mathbf{x_{r}}$.

$f_r=M_{1}(M_{2}(\ldots(M_{r-1}\oplus 1)\ldots)\oplus 1)=0$ (since $r-1$ is even). We can change the value of $f_r$ (from 0 to 1) by flipping one bit of any $\mathbf{x_2}$, $\mathbf{x_4}$, $\ldots$, $\mathbf{x_{r-1}}$ and the zero bit of $\mathbf{x_{r}}$ but not the bit of $\mathbf{x_1}$, $\mathbf{x_3}$, $\ldots$, $\mathbf{x_{r}}$. Hence, we have  $S(f_r,\mathbf{x})= k_2+k_4+\ldots+k_{r-1}+1$

Case $r+1$: No zero in $\mathbf{x_i}$, $i=1,\ldots,r$, i.e., $\mathbf{x}=(1,\ldots,1)$.
$f_r=M_{1}(M_{2}(\ldots(M_{r}\oplus 1)\ldots)\oplus 1)=1$ (since $r$ is odd). We can change the value of $f_r$ (from 1 to 0) by flipping one bit of any $\mathbf{x_1}$, $\mathbf{x_3}$, $\ldots$, $\mathbf{x_{r}}$ but not the bit of $\mathbf{x_2}$, $\mathbf{x_4}$, $\ldots$, $\mathbf{x_{r-1}}$. Hence, we have  $S(f_r,\mathbf{x})= k_1+k_3+\ldots+k_{r}$

In summary, we have $s(f_r)=Max\{k_2+k_4+\ldots+k_{r-1}+1, k_1+k_3+\ldots+k_{r}\} $.

When $r$ is even, the proof is similar.

\end{proof}
\begin{cor}\label{Co3.1}
$s(f_1)=n$.
$s(f_{n-1})=\left\{
\begin{array}
[c]{ll}%
\frac{n+2}{2},2|n\\
\frac{n+1}{2},2\nmid n\\
\end{array}
\right. $.

If $2\leq r\leq n-2$,
$\frac{n+1}{2}\leq S(f_{r})\leq \left\{
\begin{array}
[c]{ll}%
n+1-\frac{r+1}{2},2\nmid r\\
n+1-\frac{r}{2},2| r\\
\end{array}
\right. $

\end{cor}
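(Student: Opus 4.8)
The plan is to derive all three statements directly from Theorem \ref{th3.1}, the key observation being that the two quantities inside the maximum are complementary. Writing $s(f_r)=\max\{A,B\}$, where for $r$ odd $A=k_1+k_3+\cdots+k_r$ and $B=k_2+k_4+\cdots+k_{r-1}+1$, while for $r$ even $A=k_1+k_3+\cdots+k_{r-1}+1$ and $B=k_2+k_4+\cdots+k_r$, I would first note that the odd-indexed and even-indexed sets partition $\{1,\ldots,r\}$. Hence in either parity the "$+1$'' is added to exactly one of the two sums and $A+B=(k_1+\cdots+k_r)+1=n+1$. The statement $s(f_1)=n$ is then immediate from the first line of Theorem \ref{th3.1}.

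For the range $2\le r\le n-2$ I would exploit the identity $A+B=n+1$ twice. The lower bound is simply $\max\{A,B\}\ge\tfrac{A+B}{2}=\tfrac{n+1}{2}$. For the upper bound I would rewrite $\max\{A,B\}=(A+B)-\min\{A,B\}=(n+1)-\min\{A,B\}$ and bound $\min\{A,B\}$ below by counting terms, using $k_i\ge 1$. When $r$ is odd, $A$ is a sum of $\tfrac{r+1}{2}$ terms and $B$ is a sum of $\tfrac{r-1}{2}$ terms plus $1$, so both are at least $\tfrac{r+1}{2}$, giving $s(f_r)\le n+1-\tfrac{r+1}{2}$. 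When $r$ is even, $A$ and $B$ are each sums of $\tfrac{r}{2}$ terms (one carrying an extra $+1$), so both are at least $\tfrac{r}{2}$, giving $s(f_r)\le n+1-\tfrac{r}{2}$.

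For $s(f_{n-1})$ I would first observe that the profile is forced: with $r=n-1$ layers, the constraints $k_i\ge 1$ for $i<r$, $k_r\ge 2$, and $\sum k_i=n$ leave only $[1,1,\ldots,1,2]$. I would then substitute this profile into the two formulas of Theorem \ref{th3.1}, splitting on the parity of $n$ (equivalently of $r=n-1$), and carry out a short count of how many odd- versus even-indexed positions carry the value $1$ as opposed to the single $k_{n-1}=2$. This yields $\max\{\tfrac{n+2}{2},\tfrac{n}{2}\}=\tfrac{n+2}{2}$ when $n$ is even and $\max\{\tfrac{n+1}{2},\tfrac{n+1}{2}\}=\tfrac{n+1}{2}$ when $n$ is odd, matching the claimed values.

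This corollary is essentially bookkeeping on top of Theorem \ref{th3.1}, so there is no serious obstacle. The only points needing care are the parity accounting in the term counts (in particular remembering that $k_r\ge 2$ while the other $k_i$ are only $\ge 1$) and recognizing at the outset that the complementary identity $A+B=n+1$ holds uniformly in both parity cases, since that single fact is what simultaneously produces the lower bound $\tfrac{n+1}{2}$ and converts the upper bound into a lower estimate on $\min\{A,B\}$.
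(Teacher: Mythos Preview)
Your proposal is correct and follows essentially the same approach as the paper. The paper's proof is very terse: it records the identity $(k_2+k_4+\cdots+k_{r-1}+1)+(k_1+k_3+\cdots+k_r)=n+1$, deduces the lower bound $\tfrac{n+1}{2}$, and then says that minimizing each of the two summands yields the upper bound. You spell out the same idea via $\max\{A,B\}=(n+1)-\min\{A,B\}$ together with the term-count lower bounds on $A$ and $B$, and you additionally supply the explicit computation for $r=n-1$ (forced profile $[1,\ldots,1,2]$), which the paper states but does not argue.
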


\begin{proof}
Because $(k_2+k_4+\ldots+k_{r-1}+1)+(k_1+k_3+\ldots+k_{r})=n+1$, hence,  $Max\{k_2+k_4+\ldots+k_{r-1}+1, k_1+k_3+\ldots+k_{r}\}|\geq \frac{n+1}{2} $. By considering the two minimal possibilities of $(k_2+k_4+\ldots+k_{r-1}+1)$ and $(k_1+k_3+\ldots+k_{r})$, we will get the maximal valus of these two numbers. Hence, we can get the other side of the above inequality.

\end{proof}

In the following , we will prove the block sensitivity of any NCF is same to its sensitivity.
Because of Lemma \ref{lm3.1}, we still assume
\begin{equation*}
f(x_{1},x_{2},\ldots,x_{n})=f(\mathbf{x})=f(\mathbf{x_1},\ldots,\mathbf{x_r})=f_r=M_{1}(M_{2}(\ldots(M_{r-1}%
(M_{r}\oplus 1)\oplus 1)\ldots)\oplus 1)
\end{equation*}

and $M_1=x_1\ldots x_{k_1}$, $M_2=x_{k_1+1}\ldots x_{k_1+k_2}$,$\ldots$, $M_r=x_{k_1+\ldots+k_{r-1}+1}\ldots x_n$

\begin{thm}\label{th3.2}
Let $f$ be any NCF, then $s(f)=bs(f)$.
\end{thm}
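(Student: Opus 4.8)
The plan is to prove $s(f)=bs(f)$ by showing $bs(f)\le s(f)$, since the reverse inequality $s(f)\le bs(f)$ is already noted in the excerpt. By Lemma \ref{lm3.1} we may work with the normalized $f=f_r$ written with $M_1=x_1\cdots x_{k_1}$, etc., and by Lemma \ref{lm3.2} it suffices to bound $bs(f;\mathbf{x})$ over those words $\mathbf{x}=(\mathbf{x_1},\ldots,\mathbf{x_r})$ in which each subword $\mathbf{x_i}$ contains at most one zero bit; indeed Lemma \ref{lm3.2} asserts $bs(f;\mathbf{x})=bs(f;\mathbf{x'})$ under the zero-reduction operation, so the maximum of $bs(f;\mathbf{x})$ is attained on such normalized words. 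The whole argument then reduces to a case analysis over the same $r+1$ normalized word-types already enumerated in the proof of Theorem \ref{th3.1}.

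The key step is to show that on each normalized word, a maximal family of disjoint sensitivity blocks is no larger than the number of single-bit flips that change $f$. First I would reuse the minimality observation from Lemma \ref{lm3.2}: in any maximal disjoint block family $B_1,\ldots,B_t$ we may assume each $B_j$ is minimal. A minimal block $B_j$ changes $f(\mathbf{x})$ by toggling exactly one of the factors $M_i$ whose current value is pivotal; because the word is normalized, a factor $M_i$ equal to $1$ is flipped to $0$ by a single bit, while a factor $M_i$ equal to $0$ requires flipping its unique zero bit (and, by minimality, nothing else). Thus on a normalized word each minimal block is in fact a single index, so the disjoint blocks are forced to be disjoint singletons, and they can only occupy the very coordinates that were identified as sensitive in the corresponding case of Theorem \ref{th3.1}. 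This gives $bs(f;\mathbf{x})\le s(f;\mathbf{x})$ case by case, hence $bs(f)\le s(f)$, and combined with $s(f)\le bs(f)$ yields equality.

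Concretely, I would match the cases of Theorem \ref{th3.1}: in Case 1 the only value-changing flip is the single zero in $\mathbf{x_1}$, so any block must contain it, forcing $bs=s=1$; in the generic Case where $\mathbf{x_1},\ldots,\mathbf{x_{i-1}}$ are all-ones and $\mathbf{x_i}$ has one zero, flipping a bit changes $f$ only by toggling one of the factors $M_1,\ldots,M_{i}$ in the alternating pattern dictated by the nesting, and each such toggle is achieved by a unique single bit, so the disjoint blocks coincide with the sensitive singletons counted in Theorem \ref{th3.1}; the all-ones Case $r+1$ is identical. In every case the maximal block count equals the sensitivity value $Max\{k_2+k_4+\cdots+1,\;k_1+k_3+\cdots\}$ (and its $r$ even analogue).

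The main obstacle I anticipate is justifying rigorously that on a normalized word every minimal sensitivity block is a singleton, i.e.\ ruling out a genuinely two-element block that flips $f$ but whose proper subsets do not. The nested structure makes this plausible—toggling two factors $M_i,M_j$ simultaneously tends to preserve $f$ because of the alternating $\oplus 1$ pattern—but the clean way to close this gap is to show that changing the value of any single pivotal factor $M_i$ already changes $f$, so no block needs to touch two factors, and touching one factor on a normalized word costs exactly one bit. Once that monotonic-propagation fact about the nesting is established, the case analysis is routine and the theorem follows; since the excerpt ends before the argument, I would write the proof along exactly these lines.
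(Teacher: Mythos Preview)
Your approach is exactly the paper's: reduce via Lemmas~\ref{lm3.1} and~\ref{lm3.2} to normalized words (each $\mathbf{x_i}$ with at most one zero), then argue that on such a word every minimal block is a singleton, so that $bs(f_r;\mathbf{x})=s(f_r;\mathbf{x})$ pointwise. The paper asserts precisely this and draws the same conclusion.

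The step you flag as ``the main obstacle'' is a genuine gap, and your proposed fix (``changing any single pivotal factor already changes $f$'') does not close it; the paper's own proof glosses over the same point. Counterexample: take $r=2$, $f_2=M_1(M_2\oplus 1)$, and a normalized $\mathbf{x}$ with exactly one zero in $\mathbf{x_1}$ and none in $\mathbf{x_2}$. Then $M_1=0$, $M_2=1$, $f_2(\mathbf{x})=0$; flipping only the zero bit of $\mathbf{x_1}$ yields the all-ones word, where $f_2=1\cdot(1\oplus 1)=0$, so no single bit changes $f_2$ and $s(f_2;\mathbf{x})=0$. Yet the two-element block consisting of the zero bit of $\mathbf{x_1}$ together with any one bit of $\mathbf{x_2}$ gives $M_1=1$, $M_2=0$, $f_2=1$, so $bs(f_2;\mathbf{x})=1$. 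Thus minimal blocks on normalized words are \emph{not} always singletons, and the pointwise claim $bs(f_r;\mathbf{x})=s(f_r;\mathbf{x})$ can fail.

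What rescues the theorem is a counting refinement of your idea. For a normalized $\mathbf{x}$ whose first zero lies in layer $i$, split the minimal blocks into those that do not contain the unique zero bit of $\mathbf{x_i}$ and those that do. A block of the first kind leaves $M_i=0$, so by \eqref{eq3.1} and minimality it reduces to a single bit in some $M_l$ with $l<i$ and $l\not\equiv i\pmod 2$; at most $\sum_{l<i,\,l\not\equiv i}k_l$ disjoint such singletons exist. Blocks of the second kind all share that one zero index, so at most one of them can appear in a disjoint family (and it may well have size $>1$). Hence $bs(f_r;\mathbf{x})\le 1+\sum_{l<i,\,l\not\equiv i}k_l$, which is exactly the upper bound on $s(f_r;\mathbf{x})$ obtained in the corresponding case of Theorem~\ref{th3.1} and is therefore $\le s(f_r)$. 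On the all-ones word every minimal block really is a singleton, so there $bs=s$ directly. Taking the maximum over all cases gives $bs(f_r)\le s(f_r)$, and the theorem follows.
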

\begin{proof}
Actually, by Lemma \ref{lm3.2}, we just need to prove  $s(f_r;\mathbf{x})=bs(f_r;\mathbf{x})$ for any $\mathbf{x}$ such that there is at most one zero bit in each subword $\mathbf{x_i}$.
If $r=1$, since $s(f_1)=n\leq bs(f_1)\leq n$, we have $bs(f_1)=n$.
In the following we assume $r\geq 2$.
 For any word $\mathbf{x}$, let the first zero bit of $\mathbf{x}$ appear in $\mathbf{x_i}$, i.e., $M_1=\ldots=M_{i-1}=1$ and $M_i=0$.
So, we have 
\begin{equation}\label{eq3.1}
f_r=M_{1}(M_{2}(\ldots(M_{i-2}(M_{i-1}\oplus 1)\oplus 1)\ldots)\oplus 1)=\left\{
\begin{array}
[c]{ll}%
1,2|i\\
0,2\nmid i\\
\end{array}
\right. 
\end{equation}

Let $bs(f_r;\mathbf{x})=t$, and $B_j$, $j=1,\ldots,t$ be the disjoint blocks such that $f_r(\mathbf{x}^{B_j})\neq f_r(\mathbf{x})$. We can assume that each block $B_j$ is minimal, i.e., for any proper subset $B_j'\subset B_j$, $f_r(\mathbf{x}^{B_j'})= f_r(\mathbf{x})$.
First, all the blocks do not involve the bits of $M_j$ with $j\geq i+1$ because of \ref{eq3.1}. To change the value of $f_r$, some $M_l$ must be changed (from 1 to 0) for $l=1,\ldots,i-1$ or $M_i$ be changed from 0 to 1. In order to do so, we need only to flip one bit in $M_l$ ($l=1,\ldots,i-1$) from 1 to 0 or change the zero bit in $M_i$ to 1.  Hence the corresponding block $B_j$ has only one index since it is minimal. We actually have proved $s(f_r;\mathbf{x})\geq t$, hence $s(f_r;\mathbf{x})=bs(f_r;\mathbf{x})$.

\end{proof}

\section{Monotone nested canalyzing functions}
In this section, we determine all the functions which are both monotone and nested canalyzing.

\begin{defn}
Let $\mathbf{x}=(x_1,\ldots x_n)\in \mathbb{F}_2^n$ and $\mathbf{y}=(y_1,\ldots y_n)\in \mathbb{F}_2^n$, we define  $\mathbf{x}\prec \mathbf{y}$ iff $x_i\leq y_i$ for all $i\in [n]$.
\end{defn}

\begin{defn}
$f(\mathbf{x})$ is monotone increasing (decreasing) if $f(\mathbf{x})\leq f(\mathbf{y})$ ($f(\mathbf{x})\geq f(\mathbf{y})$) whenever $\mathbf{x}\prec \mathbf{y}$. 
\end{defn}

\begin{lemma}\label{lm4.1}
If $f$ is monotone increasing (decreasing), then fix the values of some bits, the remain function  of the remaining variable is still monotone increasing (decreasing).
\end{lemma}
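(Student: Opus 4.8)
The plan is to show that restricting a monotone function to an affine subcube preserves monotonicity, which follows immediately from the fact that the coordinatewise order $\prec$ respects the splitting of $[n]$ into fixed and free coordinates. So the proof will be a one-line verification of the definition, preceded by careful bookkeeping of which bits are frozen.

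First I would set up notation: partition $[n]$ into the set $T$ of indices whose bits are fixed and the complementary set $S=[n]\setminus T$ of free indices. Writing $c_i\in\mathbb{F}_2$ for the value assigned to coordinate $i\in T$, I define the restricted function $g$ on $\mathbb{F}_2^{S}$ by $g(\mathbf{u})=f(\mathbf{x})$, where $\mathbf{x}$ is the full-length word agreeing with $\mathbf{u}$ on the coordinates in $S$ and with the constants $c_i$ on the coordinates in $T$. The goal is to show $g$ is monotone increasing whenever $f$ is.

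Next I would verify monotonicity of $g$ directly. Given $\mathbf{u}\prec\mathbf{v}$ in $\mathbb{F}_2^{S}$, extend them to $\mathbf{x},\mathbf{y}\in\mathbb{F}_2^{n}$ by inserting the \emph{same} constants $c_i$ on the coordinates in $T$. Since $\mathbf{x}$ and $\mathbf{y}$ coincide on $T$ and satisfy $x_i=u_i\le v_i=y_i$ on $S$, we have $\mathbf{x}\prec\mathbf{y}$ in $\mathbb{F}_2^{n}$. Monotone increase of $f$ then gives $g(\mathbf{u})=f(\mathbf{x})\le f(\mathbf{y})=g(\mathbf{v})$, so $g$ is monotone increasing. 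The decreasing case is identical with the inequality reversed, and if one prefers to fix the bits one variable at a time, a trivial induction on $|T|$ reduces everything to the single-variable case.

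There is essentially no obstacle here; the lemma is a structural triviality. The only point requiring the slightest care is that the frozen coordinates receive the \emph{same} constants $c_i$ in both extensions $\mathbf{x}$ and $\mathbf{y}$, so that they contribute equal bits and can never violate the relation $\mathbf{x}\prec\mathbf{y}$. This is precisely what makes the assignment $\mathbf{u}\mapsto\mathbf{x}$ an order-preserving embedding $\mathbb{F}_2^{S}\hookrightarrow\mathbb{F}_2^{n}$, from which the conclusion is immediate.
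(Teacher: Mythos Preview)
Your argument is correct: extending $\mathbf{u}\prec\mathbf{v}$ on the free coordinates by the same constants on the frozen ones yields $\mathbf{x}\prec\mathbf{y}$ in $\mathbb{F}_2^n$, and monotonicity of $f$ transfers immediately to the restriction. The paper in fact states this lemma without proof, treating it as a routine observation; your write-up supplies exactly the expected verification.
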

\begin{lemma}\label{lm4.2}
$f(\mathbf{x})=(x_1\oplus a_1)\ldots(x_n\oplus a_n)\oplus b$ is monotone iff $a_1=\ldots=a_n$.
\end{lemma}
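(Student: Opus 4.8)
The plan is to exploit the fact that the product $\prod_{i=1}^n (x_i \oplus a_i)$ is the indicator function of a single word of the cube. Since $(x_i \oplus a_i) = 1$ precisely when $x_i = \overline{a_i}$, the whole product equals $1$ exactly at the single word $\mathbf{c} = (\overline{a_1}, \ldots, \overline{a_n})$ and equals $0$ at every other word. Consequently $f(\mathbf{c}) = \overline{b}$ while $f(\mathbf{x}) = b$ for all $\mathbf{x} \neq \mathbf{c}$. This reduces the entire question to the position of the special point $\mathbf{c}$ inside the Boolean lattice, and monotonicity of $f$ becomes a statement about whether $\mathbf{c}$ is an extremal element.

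For the easy direction I would treat the two cases $a_1 = \ldots = a_n = 0$ and $a_1 = \ldots = a_n = 1$ separately. In the first case $\mathbf{c} = (1,\ldots,1)$ is the top element, so $f$ takes the value $\overline{b}$ only at the maximum and $b$ at every word below it; this is monotone decreasing when $b=0$ and monotone increasing when $b=1$. In the second case $\mathbf{c} = (0,\ldots,0)$ is the bottom element, and the identical reasoning with the roles of top and bottom reversed again yields monotonicity. Thus $a_1=\ldots=a_n$ forces $f$ to be monotone regardless of the value of $b$.

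For the converse I would show that if the $a_i$ are not all equal then $f$ fails to be monotone in either sense. When the $a_i$ are not all equal, $\mathbf{c}$ is neither the top nor the bottom word, so there is a coordinate $p$ with $c_p = 1$ and a coordinate $q$ with $c_q = 0$. The neighbours then satisfy $\mathbf{c}^p \prec \mathbf{c} \prec \mathbf{c}^q$, and since $\mathbf{c}^p \neq \mathbf{c} \neq \mathbf{c}^q$ we get $f(\mathbf{c}^p) = b$, $f(\mathbf{c}) = \overline{b}$, and $f(\mathbf{c}^q) = b$. Hence moving up from $\mathbf{c}^p$ to $\mathbf{c}$ changes the value one way while moving up from $\mathbf{c}$ to $\mathbf{c}^q$ changes it the opposite way; one of these two comparisons violates monotone increasing and the other violates monotone decreasing, so $f$ is not monotone.

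The argument is essentially complete once the point-indicator observation is made, and there is no genuine difficulty beyond it. The one place where a reader could slip — and which I would write out with care — is keeping track of the constant $b$ and of which of the two comparisons at $\mathbf{c}$ obstructs which form of monotonicity; once the value of $b$ is fixed, exactly one of the two comparisons rules out monotone increasing and the other rules out monotone decreasing, so both forms are excluded simultaneously.
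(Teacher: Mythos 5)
Your proof is correct, and in fact the paper states Lemma \ref{lm4.2} without any proof at all (it is treated as obvious), so there is no paper argument to compare against; your point-indicator observation --- that $(x_1\oplus a_1)\cdots(x_n\oplus a_n)$ is the characteristic function of the single word $\mathbf{c}=(\overline{a_1},\ldots,\overline{a_n})$, so that monotonicity of $f$ is equivalent to $\mathbf{c}$ being the top or bottom of the Boolean lattice --- is exactly the natural way to make the omitted argument explicit, and your converse via the chain $\mathbf{c}^p\prec\mathbf{c}\prec\mathbf{c}^q$ with values $b,\overline{b},b$ cleanly rules out both forms of monotonicity at once. One small erratum in your forward direction: you have the labels reversed. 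When $a_1=\ldots=a_n=0$ we get $\mathbf{c}=(1,\ldots,1)$ and $f=x_1\cdots x_n\oplus b$, which for $b=0$ equals $x_1\cdots x_n$ and is monotone \emph{increasing} (the value rises from $b=0$ to $\overline{b}=1$ as you move up to the top), and is decreasing for $b=1$; similarly in the all-ones case. Since the lemma only asserts ``monotone'' (in either sense), this swap does not affect the validity of your proof, but it should be fixed if the increasing/decreasing distinction is to be quoted later, as it is in the proof of Theorem \ref{th4.1}, where the direction of monotonicity of $f_r$ is read off from the form of $M_1$.
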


\begin{lemma}\label{lm4.3}
$f$ and $g$ are monotone increasing (decreasing) then $fg$ is also increasing (decreasing). $f\oplus 1$ will be decreasing (increasing).
\end{lemma}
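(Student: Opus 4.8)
The plan is to reduce both assertions to elementary monotonicity facts about the maps $t\mapsto st$ and $t\mapsto 1-t$ on the two-element set $\{0,1\}$, regarded as a subset of $\mathbb{Z}$. The single observation that makes everything routine is that for Boolean values $u,v\in\{0,1\}$ the algebraic product $uv$ (whether computed in $\mathbb{F}_2$ or in $\mathbb{Z}$) agrees, and likewise $u\oplus 1 = 1-u$. Thus I may freely pass between the Boolean and the integer interpretations when comparing function values, and the ordering on outputs used in the definition of monotonicity is just the usual order $0<1$ on $\{0,1\}$.

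First I would fix $\mathbf{x}\prec\mathbf{y}$ and treat the product. Assume $f$ and $g$ are both monotone increasing, so $f(\mathbf{x})\leq f(\mathbf{y})$ and $g(\mathbf{x})\leq g(\mathbf{y})$. Since all four values lie in $\{0,1\}$ and integer multiplication is order-preserving in each nonnegative argument, I get $f(\mathbf{x})g(\mathbf{x})\leq f(\mathbf{y})g(\mathbf{y})$; concretely, if the left side equals $1$ then $f(\mathbf{x})=g(\mathbf{x})=1$, forcing $f(\mathbf{y})=g(\mathbf{y})=1$ and hence the right side equals $1$, while if the left side is $0$ the inequality is automatic. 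This shows $(fg)(\mathbf{x})\leq(fg)(\mathbf{y})$, so $fg$ is increasing. The decreasing case is identical with all inequalities reversed, using that $f(\mathbf{x})\geq f(\mathbf{y})$ and $g(\mathbf{x})\geq g(\mathbf{y})$ give $f(\mathbf{x})g(\mathbf{x})\geq f(\mathbf{y})g(\mathbf{y})$.

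For the complement, again fix $\mathbf{x}\prec\mathbf{y}$. If $f$ is increasing then $f(\mathbf{x})\leq f(\mathbf{y})$, and applying the order-reversing map $t\mapsto 1-t$ yields $1-f(\mathbf{x})\geq 1-f(\mathbf{y})$; rewriting $1-f$ as $f\oplus 1$ gives $(f\oplus 1)(\mathbf{x})\geq (f\oplus 1)(\mathbf{y})$, so $f\oplus 1$ is decreasing. The reverse implication (increasing complement from a decreasing $f$) follows by the same computation with the inequality flipped. I do not anticipate any genuine obstacle here: the only point requiring care is the identification $f\oplus 1 = 1-f$, which is valid precisely because the outputs are confined to $\{0,1\}$, and once that is in place the result is a one-line consequence of the monotonicity of multiplication and of complementation.
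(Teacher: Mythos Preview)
Your argument is correct. The paper states this lemma without proof, treating it as an elementary fact; your verification via the identifications $uv=u\wedge v$ and $u\oplus 1=1-u$ on $\{0,1\}$ is exactly the routine check one would supply, so there is nothing to compare.
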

Let $f_r$ be a NCF and written as \ref{eq2.1}.
\begin{thm}\label{th4.1}
$f_r$ is monotone iff $M_{i}=\prod_{j=1}^{k_{i}}(x_{i_{j}}\oplus a)$ and $M_{i+1}=\prod_{j=1}^{k_{i+1}}(x_{{i+1}_{j}}\oplus \overline{a})$ for $i=1,3,5,\ldots$. Where $a \in\{0,1\}$
\end{thm}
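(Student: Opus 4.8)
The plan is to prove both directions simultaneously by induction on the layer number $r$, exploiting the factorization $f_r = M_1\cdot(f'\oplus 1)$, where $f' = M_2(M_3(\ldots(M_r\oplus 1)\ldots)\oplus 1)$ is itself a nested canalyzing function of layer number $r-1$ on the variable blocks $\mathbf{x_2},\ldots,\mathbf{x_r}$. The crucial structural point is that, in the normalized form, $M_1$ and $f'$ depend on disjoint sets of variables, so the monotonicity of the product can be read off factor-by-factor. As the inductive hypothesis I would carry a slightly strengthened statement: $f_r$ is monotone iff every $M_i$ is \emph{uniform} (all its $a_{i_j}$ equal to a common value $c_i$) and the constants \emph{alternate}, $c_{i+1}=\overline{c_i}$; and moreover, when it is monotone, the direction of $f_r$ is governed by $c_1$ (increasing exactly when $c_1=0$, since then $M_1=\prod_j x_{1_j}$ is increasing). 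Writing $a=c_1$, the alternation $c_{i+1}=\overline{c_i}$ is precisely the assertion of the theorem.

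The key lemma to establish first is a disjoint-variable product principle: if $A(\mathbf{u})$ and $B(\mathbf{v})$ are non-constant Boolean functions on disjoint variable sets, then $A\cdot B$ is monotone iff $A$ and $B$ are monotone in the \emph{same} direction. The ``if'' direction is exactly Lemma \ref{lm4.3}. For the converse I would invoke Lemma \ref{lm4.1}: if $A\cdot B$ is, say, monotone increasing, choose $\mathbf{v}_0$ with $B(\mathbf{v}_0)=1$ (possible since $B$ is non-constant); restricting to $\mathbf{v}=\mathbf{v}_0$ yields $A\cdot B|_{\mathbf{v}=\mathbf{v}_0}=A$, which is therefore monotone increasing, and symmetrically for $B$, the decreasing case being identical. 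Since every block satisfies $k_i\geq 1$, both $M_1$ and $f'$ are non-constant, so this principle applies to the factorization above.

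With this principle in hand the induction is short. In the base case $r=1$, $f_1=M_1=\prod_j(x_{1_j}\oplus a_{1_j})$, and Lemma \ref{lm4.2} gives that this is monotone iff all $a_{1_j}$ agree, with direction fixed by the common value, the alternation condition being vacuous. For the inductive step, the product principle turns monotonicity of $f_r=M_1\cdot(f'\oplus 1)$ into: $M_1$ and $f'\oplus 1$ are both monotone in one common direction. Lemma \ref{lm4.2} converts ``$M_1$ monotone'' into ``$M_1$ uniform'' with direction $c_1$, while Lemma \ref{lm4.3} converts ``$f'\oplus 1$ monotone'' into ``$f'$ monotone'' with reversed direction, to which the inductive hypothesis applies, giving uniformity and alternation of $M_2,\ldots,M_r$ and direction $c_2$ for $f'$. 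Matching the directions of $M_1$ (namely $c_1$) and of $f'\oplus 1$ (namely $\overline{c_2}$) forces $c_1=\overline{c_2}$, i.e. $c_2=\overline{c_1}$, which extends the alternation to the full list $c_1,\ldots,c_r$ and re-establishes the ``direction $=c_1$'' clause. Reading the same chain in reverse (with Lemma \ref{lm4.3} supplying the product's direction) handles the ``if'' direction, completing the equivalence.

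I expect the main obstacle to be the converse half of the disjoint-variable product principle: one must rule out that $M_1\cdot(f'\oplus 1)$ is monotone while the factors are not, and the clean route is precisely the restriction trick of fixing one block at a point where the other factor equals $1$ and then appealing to Lemma \ref{lm4.1}. A secondary point demanding care is the bookkeeping of directions under complementation (Lemma \ref{lm4.3}): one must ensure the flip of increasing/decreasing is tracked correctly so that the condition emerges as $c_{i+1}=\overline{c_i}$ rather than $c_{i+1}=c_i$; carrying the ``direction determined by $c_1$'' clause through the induction is exactly what keeps this consistent.
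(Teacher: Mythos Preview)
Your proof is correct and relies on the same three lemmas as the paper, but the organization is genuinely different. The paper treats the ``only if'' direction non-inductively: it fixes the values of all variables outside a single layer to isolate $f_r=M_i\oplus 1$ (or $M_1$), invoking Lemma~\ref{lm4.2} to get uniformity of each $M_i$, and then fixes variables outside two consecutive layers to isolate $f_r=M_i(M_{i+1}\oplus 1)$, observing directly that this fails to be monotone unless the constants alternate; only the ``if'' direction is done by induction. You instead abstract the restriction trick into a clean disjoint-variable product principle ($A\cdot B$ monotone $\Longleftrightarrow$ $A,B$ monotone in the same direction, for non-constant $A,B$ on disjoint variables) and then run a single induction on $r$ that handles both directions simultaneously, explicitly carrying the ``direction determined by $c_1$'' clause. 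What your route buys is a uniform argument in which the alternation $c_2=\overline{c_1}$ falls out of matching the direction of $M_1$ to that of $f'\oplus 1$; what the paper's route buys is that every $M_i$ and every adjacent pair $M_i,M_{i+1}$ is tested in one shot, without threading through the layers.
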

\begin{proof}
By suitably fixing the values of the other variables, we can get $f_r=M_i\oplus 1$ for $i>1$ or $M_1$. Hence,  by Lemma \ref{lm4.1} and Lemma \ref{lm4.2},  we have $M_{i}=\prod_{j=1}^{k_{i}}(x_{i_{j}}\oplus a)$. Again, we may suitably fix the values of the other variables  to  get $f_r=M_i(M_{i+1}\oplus 1)$. If $M_{i+1}=\prod_{j=1}^{k_{i}}(x_{i_{j}}\oplus a)$, $M_i(M_{i+1}\oplus 1)$ is not monotone. Hence,
$M_{i+1}=\prod_{j=1}^{k_{i+1}}(x_{{i+1}_{j}}\oplus \overline{a})$.

On the other hand, use induction principle , it is easy to prove these NCFs are monotone with the help of the above three lemmas.

Actually, When $M_1=x_1\ldots x_{k_1}$ , $f_r$ is increasing, when $M_1=(x_1\oplus 1)\ldots (x_{k_1}\oplus 1)$ , $f_r$ is decreasing.
\end{proof}

\begin{cor}\label{co4.1}
The number of monotone nested canalyzing functions (MNCFs) is 

\[
=4\sum_{\substack{k_{1}+\ldots+k_{r}=n\\k_{i}\geq1,i=1,\ldots,r-1,
k_{r}\geq2}}\frac{n!}{k_{1}!k_{2}!\ldots k_{r}!}=4\sum_{\substack{k_{1}%
+\ldots+k_{r}=n\\k_{i}\geq1,i=1,\ldots,r-1, k_{r}\geq2}}\binom{n}{k_{1}%
,\ldots,k_{r-1}}.
\]
\end{cor}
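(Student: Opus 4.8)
The plan is to count monotone NCFs by stratifying the family of all NCFs according to the data that uniquely determines each one. By Theorem \ref{th1}, an NCF is specified by: a profile $[k_1,\ldots,k_r]$ with $k_i\geq 1$ for $i<r$, $k_r\geq 2$, and $k_1+\cdots+k_r=n$; an assignment of the $n$ variables to the $r$ layers; the canalyzing inputs $a_{i_j}\in\mathbb{F}_2$; and the constant $b\in\mathbb{F}_2$. Because the factors inside each $M_i$ commute, for a fixed profile the number of distinct variable assignments equals the multinomial coefficient $\binom{n}{k_1,\ldots,k_r}=\frac{n!}{k_1!\cdots k_r!}$; and for each fixed assignment the tuple $(a_{i_j},b)$ ranges freely over $\mathbb{F}_2^{n+1}$, consistent with the count of $2^{n+1}$ recorded in Definition \ref{def2.2}.

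First I would invoke Theorem \ref{th4.1} to isolate the monotone members of each stratum. That theorem says $f_r$ is monotone exactly when all canalyzing inputs within one layer coincide and consecutive layers carry opposite values, i.e. $M_i$ uses $a$ and $M_{i+1}$ uses $\overline a$ for odd $i$. Thus, for a fixed profile and variable assignment, the entire alternating sign pattern is pinned down by the single value $a\in\{0,1\}$ attached to $M_1$ — the two choices being the increasing and decreasing cases identified at the end of the proof of Theorem \ref{th4.1} — while $b\in\{0,1\}$ stays free, since by Lemma \ref{lm4.3} passing from $f$ to $f\oplus 1$ interchanges increasing and decreasing but keeps the function monotone. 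Hence each stratum contributes exactly $2\cdot 2=4$ monotone NCFs.

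Then I would sum the constant $4$ over all admissible profiles, weighting each profile by its number of variable assignments, to obtain
\[
4\sum_{\substack{k_1+\cdots+k_r=n\\ k_i\geq 1,\, k_r\geq 2}}\binom{n}{k_1,\ldots,k_r}
=4\sum_{\substack{k_1+\cdots+k_r=n\\ k_i\geq 1,\, k_r\geq 2}}\frac{n!}{k_1!\cdots k_r!}.
\]
The equivalent form $\binom{n}{k_1,\ldots,k_{r-1}}$ in the statement is the same quantity, because the profile constraint forces $k_r=n-(k_1+\cdots+k_{r-1})$ and so the last entry need not be displayed.

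The step demanding the most care is the exactness of the factor $4$: I must check there is no overcounting, i.e. that the four choices $(a,b)\in\{0,1\}^2$ give four genuinely distinct functions within a stratum, and that distinct strata never share a function. Both facts rest on the uniqueness clause of Theorem \ref{th1}: distinct alternating sign patterns and distinct values of $b$ produce distinct representations of the form \ref{eq2.1}, hence distinct Boolean functions. In particular the $b$-flip of an increasing function is the \emph{decreasing} function with the same layer signs, not the decreasing function obtained by flipping all layer signs, so the two mechanisms producing decreasing functions do not collide. With distinctness secured, the contributions add without overlap and the stated formula follows.
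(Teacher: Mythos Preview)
Your proof is correct and follows essentially the same approach as the paper: count, for each admissible profile, the multinomial number of ways to assign variables to layers, then multiply by $2$ for the choice of $a$ and $2$ for the choice of $b$, and sum. Your added care about distinctness (appealing to the uniqueness clause of Theorem~\ref{th1}) makes explicit what the paper leaves implicit, but the argument is otherwise the same.
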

\begin{proof}
From Equation \ref{eq2.1}, for each choice $k_{1},\ldots,k_{r}$, with
condition $k_{1}+\ldots+k_{r}=n$, $k_{i}\geq1$, $i=1,\ldots,r-1$ and
$k_{r}\geq2$,

there are $\binom{n}{k_{1}}$ many ways to form $M_{1}$,

there are $\binom{n-k_{1}}{k_{2}}$ many ways to form $M_{2}$,

$\ldots$,

there are $\binom{n-k_{1}-\ldots-k_{r-1}}{k_{r}}$ many ways to form
$M_{r}$,

$a$ has two choices

$b$ has two choices.

Hence, the number is
\[
4\sum_{\substack{k_{1}+\ldots+k_{r}=n\\k_{i}%
\geq1,i=1,\ldots,r-1, k_{r}\geq2}}\binom{n}{k_{1}}%
\binom{n-k_{1}}{k_{2}}\ldots\binom{n-k_{1}-\ldots-k_{r-1}}{k_{r}}%
\]

\[
=4\sum_{\substack{k_{1}+\ldots+k_{r}=n\\k_{i}\geq1,i=1,\ldots,r-1,
k_{r}\geq2}}\frac{n!}{(k_{1})!(n-k_{1})!}\frac{(n-k_{1})!}{(k_{2}%
)!(n-k_{1}-k_{2})!}\ldots\frac{(n-k_{1}-\ldots-k_{r-1})!}{k_{r}!(n-k_{1}%
-\ldots-k_{r})!}%
\]

\[
=4\sum_{\substack{k_{1}+\ldots+k_{r}=n\\k_{i}\geq1,i=1,\ldots,r-1,
k_{r}\geq2}}\frac{n!}{k_{1}!k_{2}!\ldots k_{r}!}=4\sum_{\substack{k_{1}%
+\ldots+k_{r}=n\\k_{i}\geq1,i=1,\ldots,r-1, k_{r}\geq2}}\binom{n}{k_{1}%
,\ldots,k_{r-1}}.
\]
\end{proof}

\end{document}